\DeclareFontFamily{OT1}{pzc}{}
\DeclareFontShape{OT1}{pzc}{m}{it}{<-> s * [1.10] pzcmi7t}{}
\DeclareMathAlphabet{\mathpzc}{OT1}{pzc}{m}{it}
\newtheorem{theorem}{Theorem}[section]
\newtheorem{lemma}[theorem]{Lemma}
\newtheorem{proposition}[theorem]{Proposition}
\providecommand{\R}{\mathbb{R}}
\providecommand{\SO}{\mathbf{SO}}
\providecommand{\SE}{\mathbf{SE}}
\providecommand{\grpG}{\mathbf{G}}
\providecommand{\so}{\mathfrak{so}}
\providecommand{\se}{\mathfrak{se}}
\providecommand{\Sph}{\mathrm{S}}
\providecommand{\calE}{\mathcal{E}}
\providecommand{\calL}{\mathcal{L}}
\providecommand{\Id}{I} 
\providecommand{\eb}{\mathbf{e}} 
\DeclareMathOperator{\tr}{tr}
\DeclareMathOperator{\diag}{diag}
\DeclareMathOperator{\Ad}{Ad}
\providecommand{\td}{\mathrm{d}}
\providecommand{\dtau}{\td \tau}
\providecommand{\scirc}{%
    \hbox{\fontfamily{\rmdefault}\fontsize{0.4\dimexpr(\f@size pt)}{0}\selectfont{\raisebox{-0.52ex}[0ex][-0.52ex]{$\circ$}}}}
\providecommand{\rcirc}{%
	\hbox{\fontfamily{\rmdefault}\fontsize{0.4\dimexpr(\f@size pt)}{0}\selectfont{\raisebox{-0.2ex}[0ex][-0.52ex]{$\circ$}}}}
\mathchardef\mhyphen="2D
\begin{document}

\title{Exploiting spatial group error and synchrony for a unicycle tracking controller}
\headertitle{Exploiting spatial group error and synchrony for a unicycle tracking controller}

\author{
\href{https://orcid.org/0000-0001-2345-6789}{\includegraphics[scale=0.06]{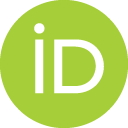}\hspace{1mm}
Matthew Hampsey}
\\
	Systems Theory and Robotics Group \\
	Australian National University \\
    ACT, 2601, Australia \\
	\texttt{matthew.hampsey@anu.edu.au} \\
	\And	\href{https://orcid.org/0000-0001-2345-6789}{\includegraphics[scale=0.06]{orcid.png}\hspace{1mm}
Pieter van Goor}
\\
	Systems Theory and Robotics Group \\
	Australian National University \\
    ACT, 2601, Australia \\
	\texttt{pieter.vangoor@anu.edu.au} \\
\And	\href{https://orcid.org/0000-0001-2345-6789}{\includegraphics[scale=0.06]{orcid.png}\hspace{1mm}
Robert Mahony}
\\
	Systems Theory and Robotics Group \\
	Australian National University \\
	ACT, 2601, Australia \\
	\texttt{robert.mahony@anu.edu.au} \\
}

\maketitle

\thispagestyle{empty}
\pagestyle{empty}

\begin{abstract}
Trajectory tracking for the kinematic unicycle has been heavily studied for several decades.
The unicycle admits a natural $\SE(2)$ symmetry, a key structure exploited in many of the most successful nonlinear controllers in the literature.
To the author's knowledge however, all prior work has used a body-fixed, or left-invariant, group error formulation for the study of the tracking problem.
In this paper, we consider the spatial, or right-invariant, group error in the design of a tracking controller for the kinematic unicycle.
We provide a physical interpretation of the right-invariant error and go on to show that the associated error dynamics are drift-free, a property that is not true for the body-fixed error.
We exploit this property to propose a simple nonlinear control scheme for the kinematic unicycle and prove almost-global asymptotic stability of this control scheme for a class of persistently exciting trajectories.
We also verify performance of this control scheme in simulation for an example trajectory.
\end{abstract}
\section{Introduction} \label{sec:Introduction}

The wheel is the mechanically simplest and most energy-efficient means of transporting mass on land.
Wheeled robots form an important subset of mobile robots, with applications across transportation, warehouse logistics, floor cleaning and hospitals (\cite{padenSurveyMotionPlanning2016}, \cite{kosticCollisionfreeTrackingControl2009}, \cite{asafaDevelopmentVacuumCleaner2018}, \cite{takahashiDevelopingMobileRobot2010}).
A task description for such applications chains together algorithms for planning, state estimation and control subtasks.
The control subtask is posed as a trajectory tracking problem, whereby a desired feasible trajectory is provided by the planning module and the control objective is to steer the vehicle to follow this trajectory (\cite{samsonModelingControlWheeled2008}).
The precise means by which is this is achieved largely depends on the wheel configuration and the actuation capabilities of the specific robot (\cite{samsonModelingControlWheeled2008}).

The kinematic model of the unicycle has been heavily studied in the systems and control literature (\cite{micaelliTrajectoryTrackingUnicycletype1993}).
The simplicity of the model coupled with the fact that it captures many of the key features associated with more complex systems make it an ideal test case for the development of sophisticated nonlinear control designs.
The model is nonholonomic, so by Brockett's criteria (\cite{brockettAsymptoticStabilityFeedback1983}) the set of stabilisable trajectories is limited by the analytic properties (such as smoothness) of any candidate feedback function.
Furthermore, the unicycle possesses a natural $\SE(2)$ symmetry, and this structure can be exploited in developing group error expressions on $\SE(2)$ that are core to many of the most popular control designs in the literature.
For example, in (\cite{kanayamaStableTrackingControl1990}), the $\SE(2)$ symmetry was used to define a left-invariant error and used as the basis for a nonlinear tracking control that stabilised the unicycle to certain classes of trajectories almost-globally.
In (\cite{micaelliTrajectoryTrackingUnicycletype1993}), a feedback-linearisation approach is used to propose a trajectory tracking control scheme for the unicycle.
In (\cite{panteleyExponentialTrackingControl1998}), an exponential tracking controller was designed by cascading linear controllers.
This controller required persistent excitation of the input angular rate.
In (\cite{jiangTrackingControlMobile1997}), an integrator backstepping approach is taken in order to design a globally stable controller.
More recently, in (\cite{rodriguez-cortesNewGeometricTrajectory2022}), a time-varying cascaded controller is used to provide almost-global tracking stability.
In (\cite{morinPracticalStabilizationDriftless2003}), a control scheme for the practical stabilisation for trajectories for general controllable driftless systems on Lie groups is investigated, including the example of the kinematic unicycle.

In this paper, we investigate the use of the spatial, or right-invariant, error for control design.
This perspective is a key contribution of the paper since to  the authors knowledge all the published control algorithms (for example, \cite{kanayamaStableTrackingControl1990}, \cite{micaelliTrajectoryTrackingUnicycletype1993}, \cite{panteleyExponentialTrackingControl1998}, \cite{jiangTrackingControlMobile1997}, \cite{morinPracticalStabilizationDriftless2003}, \cite{noijen*ObservercontrollerCombinationUnicycle2005}, \cite{meraSlidingmodeBasedController2020}, \cite{rochelTrajectoryTrackingUncertain2022}) use a body-fixed, or left-invariant, $\SE(2)$ error.
We provide physical insight into the difference between spatial and body group errors and go on to show that they lead to different error dynamics.
In particular, applying the desired velocity as a feed-forward input leads to synchronous error dynamics for the spatial group error.
This is in contrast to the body-fixed error dynamics that are time-varying, and may diverge, even if the vehicle is fed the correct desired input.
Synchrony of the error dynamics makes the design of the correction term straightforward since there is no need to dominate potential unstable error dynamics as is the case when the body group error is used.
We propose a simple Lyapunov function, and show that projecting the gradient of the Lyapunov function onto the actuated directions naturally leads to a gradient-based control design.
We prove almost-global asymptotic and local exponential stability of the control design for a class of persistently exciting bounded trajectories and provide numerical simulations that demonstrate the controller stabilises an example trajectory empirically.
The proposed method will generalise to all systems defined by left-invariant vector fields on general Lie groups, that is, all kinematic systems with Lie-group symmetries equipped with body-fixed actuators including marine vehicles, aerial vehicles, satellites, etc.
\section{Preliminaries} \label{sec:Preliminaries}
The matrix Lie group $\SO(2)$ is defined by the set of matrices
\begin{align*}
        \SO(2) = \left\{ R(\theta) \coloneqq \begin{pmatrix}
                \cos(\theta) & -\sin(\theta)\\
                \sin(\theta) & \cos(\theta)
        \end{pmatrix} : \theta \in [-\pi,\pi) \right\}.
\end{align*}
The associated Lie algebra $\so(2)$ is defined by the set
\begin{align*}
        \so(2) = \left\{ \Omega^\times \coloneqq \begin{pmatrix}
                0 & -\Omega\\
                \Omega & 0
        \end{pmatrix} : \Omega \in \R \right\},
\end{align*}
with the special element
\begin{align*}
        \mathbf{1}^\times \coloneqq \begin{pmatrix}
        0 & -1 \\ 1 & 0
\end{pmatrix}.
\end{align*}
The matrix Lie group $\SE(2)$ is defined by the set of matrices

\begin{align*}
        \SE(2) = \left\{ \begin{pmatrix}
                R & p\\
                0 & 1
        \end{pmatrix} : R \in \SO(2), p \in \R^2 \right\}.
\end{align*}

The associated Lie algebra $\se(2)$ is defined by the set
\begin{align*}
        \se(2) = \left\{ \begin{pmatrix}
                \Omega^\times & v\\
                0 & 0
        \end{pmatrix} : \Omega^\times \in \so(2), v \in \R^2 \right\}.
\end{align*}


There is a natural vector space isomorphism between $\R^3$ and $\se(2)$: define the mapping $(\cdot)^\wedge : \R^3 \to \se(2)$ by
\begin{align*}
        \begin{pmatrix} \Omega \\ v_x \\ v_y \end{pmatrix}^\wedge = \begin{pmatrix} 0 & -\Omega  & v_x \\ \Omega & 0 &v_y \\ 0 & 0 & 0 \end{pmatrix}.
\end{align*}
Let $(\cdot)^\vee : \se(2) \to \R^3$ be its inverse.
Similarly, given a linear map $A: \se(2) \to \se(2)$, let $A^\vee: \R^3 \to \R^3$ be the corresponding linear map on $\R^3$ defined by $A^\vee x  = (A (x^\wedge))^\vee$.

The Frobenius inner product is an inner product on $\R^{m \times m}$, defined by
\begin{align*}
        \langle A, B \rangle_{\mathbb{F}} = \tr(A^\top B).
\end{align*}
Note that $\langle A, BC \rangle_{\mathbb{F}} = \langle B^\top A, C \rangle_{\mathbb{F}}$.
Direct computation shows that
\begin{align*}
        \left\langle \begin{pmatrix}
        0 & -a & b \\ a & 0 & c\\ 0 & 0 & 0
\end{pmatrix}, \begin{pmatrix}
        0 & -d & e \\ d & 0 & f\\ 0 & 0 & 0
\end{pmatrix} \right\rangle_{\mathbb{F}} = 2ad + be + cf,
\end{align*}
so $\langle x^\wedge, y^\wedge \rangle_\mathbb{F} = \langle S x, y \rangle$, where $S = \diag(2, 1, 1)$.
The $\se(2)$ projection operator $\mathbb{P}_{\se(2)}: \R^{3 \times 3} \to \se(2)$ is given by
\begin{align*}
        \mathbb{P}_{\se(2)} \left( \begin{pmatrix}
                A_{2 \times 2} & x_{2 \times 1} \\ a & r
        \end{pmatrix} \right) = \begin{pmatrix}
                \frac{A_{2 \times 2} - A_{2 \times 2}^\top}{2} & x_{2 \times 1} \\ 0 & 0
        \end{pmatrix}.
\end{align*}

A function $F : \R_{\geq 0} \to \R^{n \times m}$ is called \textit{persistently exciting} if there exist real numbers $\varepsilon > 0$ and  $T > 0$ such that
\begin{align}
        \int_t^{t+T} F(\tau)^\top F(\tau) \td \tau \geq \varepsilon \Id_m,
\label{eq:persistent_excitation}
\end{align}
for all $t \in \R_{\geq 0}$.

\section{Problem Description} \label{sec:ProblemDescription}

\begin{figure}[!tb]
        \includegraphics[width=0.7\linewidth]{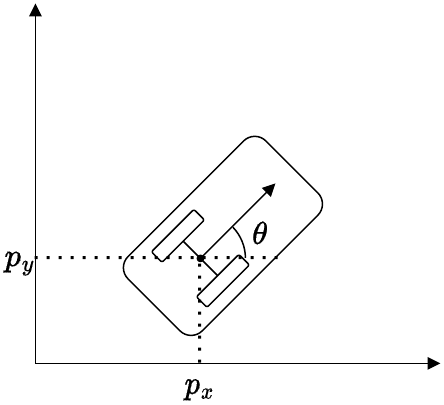}
        \centering
        \caption{Unicycle model}
        \label{fig:unicycle}
\end{figure}

Consider the standard kinematic unicycle (Fig. \ref{fig:unicycle}) with position $p = (p_x,p_y) \in \R^2$ , heading angle $\theta \in \Sph^1 \simeq [-\pi,\pi)$, and forward and angular inputs $v, \Omega \in \R$, respectively.
The system states evolve according to the kinematics
\begin{subequations}\label{eq:unicycle_kinematics}
\begin{align}
        \dot{p}_x &= v \cos (\theta)\\
        \dot{p}_y &= v \sin (\theta)\\
        \dot{\theta} &= \Omega.
\end{align}
\end{subequations}
The system state can be represented in the matrix Lie group $\SE(2)$ by
\begin{align} \label{eq:lie_group_state}
        X &= \begin{pmatrix}
                R(\theta) & p \\
                0_{1\times 2} & 1
        \end{pmatrix}
\end{align}
Using this representation, the system dynamics may be expressed as left-invariant dynamics on the group,
\begin{align}\label{eq:lie_group_dynamics}
        \dot{X} &= XU, &
        U &= \begin{pmatrix}
                \Omega \mathbf{1}^\times & v \eb_1 \\
                0_{1\times 2} & 0
        \end{pmatrix} \in \se(2).
\end{align}
Additionally, by defining
\begin{align}
        B = \begin{pmatrix}
                1 & 0 \\ 0 & 1 \\ 0 & 0
        \end{pmatrix} \text{ and }
        u = \begin{pmatrix}
                \Omega \\ v
        \end{pmatrix} \label{eq:little_u},
\end{align}
one has $U = (Bu)^\wedge$ and
\begin{align*}
        \dot{X} = X(Bu)^\wedge.
\end{align*}

In this paper, we address the problem of tracking a desired trajectory of the unicycle.
Let $\theta_d \in \Sph^1$ and $p_d \coloneqq ( p_x^d,  p_y^d) \in \R^2$ denote the desired heading angle and position of the unicycle, and let $X_d \in \SE(2)$ denote the representation of the desired state in the Lie group as in \eqref{eq:lie_group_state}.
Likewise, let $\Omega_d, v_d \in \R$ denote the inputs associated with the desired trajectory, let $U_d \in \se(2)$ denote their representation in the Lie algebra as in \eqref{eq:lie_group_dynamics} and let $u_d$ denote their representation in $\R^2$ as in \eqref{eq:little_u}.
Then the desired trajectory dynamics are
\begin{align}\label{eq:desired_trajectory_dynamics}
        \dot{X}_d = X_d U_d = X_d (B u_d)^\wedge.
\end{align}
The problem becomes that of finding an admissible control $U \in \se(2)$ so that the true system state $X$ converges to and tracks the desired system state $X_d$.

\section{Lie Group Errors} \label{sec:LieGroupErrors}

The most common design methodology for constructive nonlinear control does not directly try to control $X(t) \to X_d(t)$.
Rather the approach taken is to define an error $E(X(t), X_d(t))$ and study the problem of driving $E(X(t), X_d(t)) \to E_\star$ where $E_\star$ is some constant reference such that when $E = E_\star$ then $X = X_d$.
On any Lie group $\grpG$ there are two natural Lie group errors that can be used for this role.

\noindent{\textbf{Body-fixed group error:}}
The body fixed (or left-invariant) group error
\begin{align}
E_L & \coloneqq X_d^{-1} X
 = \begin{pmatrix}
R_d^\top R & R_d^\top (p - p_d)\\
0 & 1
\end{pmatrix}. \label{eq:EL_components}
\end{align}
This error can be interpreted as the $\SE(2)$ transformation taking the frame $X_d$ to $X$, expressed in coordinates of the frame $X_d$ (Figure \ref{fig:leftInvariantError}).

\noindent{\textbf{Spatial group error:}}
The spatial (or right-invariant) group error
\begin{align}
E_R &\coloneqq X X_d^{-1}
= \begin{pmatrix}
R R_d^\top & p - R R_d^\top p_d \\ 0_{1\times 2} & 1
\end{pmatrix}. \label{eq:error_definition}
\end{align}
This error can be interpreted as the $\SE(2)$ transformation taking the frame $X_d$ to $X$, expressed in reference coordinates (Figure~\ref{fig:rightInvariantError}).

Note that in both cases the group error encodes the $\SE(2)$ transformation that moves the desired state to the robot state.
Clearly, if this transformation is identity ($E = I_3 \in \SE(2)$) then $X = X_d$ for either error definition.
However, the physical transformations are quite different as seen in \eqref{eq:EL_components} and \eqref{eq:error_definition}.
For the body-fixed transformation the rotation and translation are decoupled since the rotation is undertaken around the body reference.
Conversely, in the spatial transformation, the rotation is undertaken around the origin of the reference frame and moves the frame.

The body-fixed error has been the natural choice for tracking control design for several reasons.
It is the group error formulation taught in most text books and corresponds to the coordinate change formula that most roboticists use to understand rigid-body transformations.
It is also the natural error to encode a rigid-body transformation from the perspective of the robot itself.
In contrast, the spatial error representation is less commonly used in mainstream robotics, although it is core to the field of screws/twists and is used in exponential coordinates.
The spatial error also depends on the reference frame and as such is not intrinsic to the motion of the vehicles.
The justification for considering the spatial group error comes from studying the error dynamics.

\begin{lemma}
Let $X_d(t) \in \SE(2)$ be a trajectory satisfying \eqref{eq:desired_trajectory_dynamics} and let $X(t) \in \SE(2)$ be a trajectory satisfying \eqref{eq:lie_group_dynamics}.
The error dynamics of the body-fixed error \eqref{eq:EL_components} are given by
\begin{align*}
\dot{E}_L = -U_d E_L + E_LU.
\end{align*}
\end{lemma}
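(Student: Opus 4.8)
The plan is a one-line computation with the product rule applied to $E_L = X_d^{-1}X$, together with the standard formula for the derivative of a matrix inverse. First I would record that identity: since $X_d(t)X_d(t)^{-1} = \Id$ for all $t$, differentiating gives $\dot{X}_d X_d^{-1} + X_d \ddt(X_d^{-1}) = 0$, hence $\ddt(X_d^{-1}) = -X_d^{-1}\dot{X}_d X_d^{-1}$. No singularity can occur along the trajectory because $X_d(t)$ lies in the group $\SE(2)$ and is therefore invertible for all $t$, so this manipulation is valid globally in time.

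Next I would substitute the trajectory dynamics. Using \eqref{eq:desired_trajectory_dynamics}, $\dot{X}_d = X_d U_d$, so $\ddt(X_d^{-1}) = -X_d^{-1} X_d U_d X_d^{-1} = -U_d X_d^{-1}$. Then, combining this with $\dot{X} = XU$ from \eqref{eq:lie_group_dynamics}, the product rule yields
\begin{align*}
\dot{E}_L = \ddt(X_d^{-1})\, X + X_d^{-1}\dot{X} = -U_d X_d^{-1} X + X_d^{-1} X U.
\end{align*}
Recognising $X_d^{-1}X = E_L$ in both terms gives $\dot{E}_L = -U_d E_L + E_L U$, which is the claimed identity.

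There is essentially no obstacle here; the only point worth a remark is that $U_d$ and $U$ are time-varying elements of $\se(2)$, but since constancy was never used, the computation goes through verbatim and $\dot{E}_L$ remains a well-defined curve in $\R^{3\times3}$ (indeed its structure is consistent with $E_L \in \SE(2)$). As a sanity check one can verify the formula blockwise against \eqref{eq:EL_components}: using $\dot{R} = R\,\Omega\mathbf{1}^\times$ and $\dot{R}_d = R_d\,\Omega_d\mathbf{1}^\times$ together with $(\mathbf{1}^\times)^\top = -\mathbf{1}^\times$, one obtains $\ddt(R_d^\top R) = -\Omega_d \mathbf{1}^\times (R_d^\top R) + (R_d^\top R)\,\Omega\mathbf{1}^\times$, which is exactly the top-left block of $-U_d E_L + E_L U$, with the translational block matching similarly.
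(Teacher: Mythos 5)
Your proposal is correct and follows essentially the same route as the paper: apply the product rule with $\ddt(X_d^{-1}) = -X_d^{-1}\dot{X}_d X_d^{-1}$, substitute $\dot{X}_d = X_d U_d$ and $\dot{X} = XU$, and identify $X_d^{-1}X = E_L$ in both terms. The added remarks on invertibility and the blockwise sanity check are fine but not needed beyond the paper's computation.
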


\begin{proof}
By straightforward computation,
\begin{align*}
\dot{E}_L &= -X_d^{-1}\dot{X}_d X_d^{-1}X + X_d^{-1}\dot{X}\\
                &= -X_d^{-1} X_d U_d X_d^{-1}X + X_d^{-1}XU\\
                &= -U_d E_L + E_L U.
\end{align*}
\end{proof}

\begin{figure}[!tb]
        \includegraphics[width=0.7\linewidth]{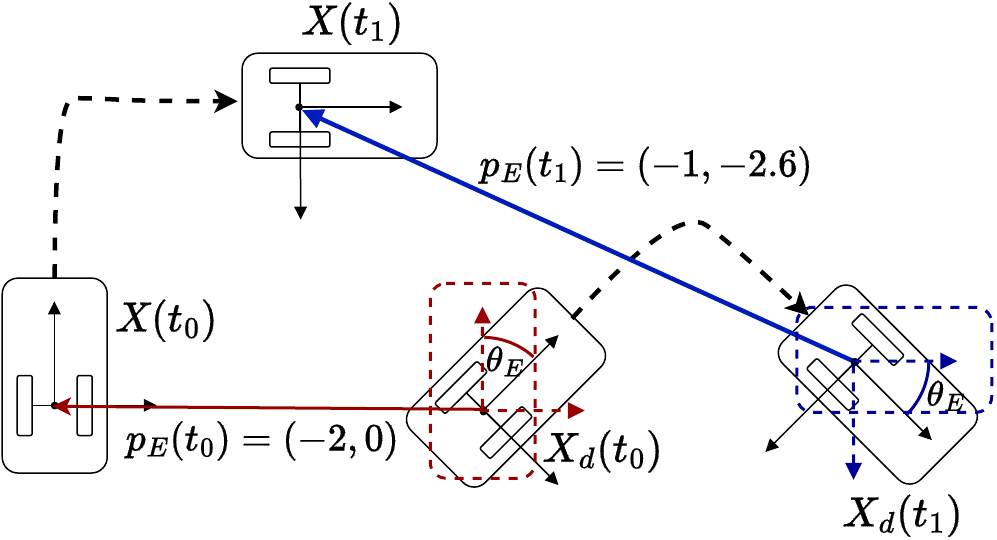}
        \centering
        \caption{Visualisation of the body-fixed, or left-invariant, error.
        The relative transformation between $X_d$ and $X$ in both cases involves an anti-clockwise rotation of 45 degrees. However, at $t_0$ the translation is (-2, 0) while at $t_1$ the translation is (-1, -2.6) demonstrating that the error is not preserved.}
        \label{fig:leftInvariantError}
\end{figure}

Consider applying the reference input $U = U_d$ as a feedforward compensation.
Then the body group error evolves according to  $\dot{E} = E_L U_d - U_d E_L$.
The evolution of the error term is visualised in Figure \ref{fig:leftInvariantError} and it is clear that the system is not synchronous, that is, the error is not preserved under feed-forward control.
Here $X_d(t_0)$ is rotated anti-clockwise 45 degrees and translated by (-2, 0) (in the frame of $X_d(t_0)$) to arrive at $X(t_0)$
\begin{align*}
X_d^{-1}(t_0) X(t_0) = \begin{pmatrix}
0.717 & - 0.717 & -2 \\ 0.717 & 0.717 & 0 \\ 0 & 0 & 1
\end{pmatrix}
\end{align*}
At time $t_1$, $X_d(t_1)$ is rotated anti-clockwise 45 degrees and then translated by (-1, -2.6) (in the frame of $X_d(t_1)$) to arrive at $X(t_1)$.
\begin{align*}
X_d^{-1}(t_1)X (t_1)  = \begin{pmatrix}
0.717 & - 0.717 & -1 \\ 0.717 & 0.717 & -2.6 \\ 0 & 0 & 1
\end{pmatrix}.
\end{align*}
In the context of control design, this non-zero term
$\dot{E} = E_L U_d - U_d E_L$ in the error dynamics presents as a drift term  that must be compensated by the control action.
If the system were fully-actuated, then the exogenous dynamics component $U_d E$ can be compensated by setting
\begin{align*}
U = \Ad_{E_L^{-1}}U_d + E_L^{-1} \Delta,
\end{align*}
where $\Delta$ is then the additional input that drives $E_L \to I_3$.
For example, see \cite{chaturvediRigidBodyAttitudeControl2011}, \cite{leeGeometricTrackingControl2010}.
However, if the system is under-actuated, as with the kinematic unicycle, the term $\Ad_{E_L^{-1}}U_d$ generally does not lie in the actuated directions.  
In this case, the exogenous dynamics cannot be directly compensated and the control must be used to dominate the effects of the drift in the Lyapunov analysis.
There are various well established control algorithms that take this approach in the literature, such as \cite{kanayamaStableTrackingControl1990} or \cite{leeTrackingControlUnicyclemodeled2001}.

On the other hand, consider the choice of the spatial or right-variant error $E_R \coloneqq X X_d^{-1}$.

\begin{lemma}
Let $X_d(t) \in \SE(2)$ be a trajectory satisfying \eqref{eq:desired_trajectory_dynamics} and let $X(t) \in \SE(2)$ be a trajectory satisfying \eqref{eq:lie_group_dynamics}.
Define the control difference $\tilde{U} \coloneqq U - U_d \in \se(2)$ (and $\tilde{u} \coloneqq u - u_d \in \R^2$, so that $(B\tilde{u})^\wedge = \tilde{U}$).
Then the dynamics of $E_R$ \eqref{eq:error_definition} are given by
\begin{align}\label{eq:error_dynamics}
        \dot{E}_R &= E_R \Ad_{X_d} \tilde{U} = E_R \Ad_{X_d} (B\tilde{u})^\wedge.
\end{align}
\end{lemma}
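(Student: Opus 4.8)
The plan is to differentiate $E_R = XX_d^{-1}$ directly using the product rule together with the standard identity for the derivative of a matrix inverse. First I would compute $\ddt X_d^{-1} = -X_d^{-1}\dot{X}_d X_d^{-1}$, and then substitute the desired-trajectory dynamics $\dot{X}_d = X_d U_d$ from \eqref{eq:desired_trajectory_dynamics} to obtain $\ddt X_d^{-1} = -X_d^{-1}(X_d U_d)X_d^{-1} = -U_d X_d^{-1}$. Applying the product rule to $E_R = X X_d^{-1}$ and substituting $\dot{X} = XU$ from \eqref{eq:lie_group_dynamics} then gives $\dot{E}_R = (XU)X_d^{-1} + X(-U_d X_d^{-1}) = X(U - U_d)X_d^{-1} = X\tilde{U}X_d^{-1}$.

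The second step is purely algebraic: rewrite $X\tilde{U}X_d^{-1}$ in the claimed form by inserting $X_d^{-1}X_d = I_3$, so that $X\tilde{U}X_d^{-1} = (XX_d^{-1})(X_d\tilde{U}X_d^{-1}) = E_R\,\Ad_{X_d}\tilde{U}$, using the definition $\Ad_g \xi = g\xi g^{-1}$ of the adjoint action. The final equality $\Ad_{X_d}\tilde{U} = \Ad_{X_d}(B\tilde{u})^\wedge$ is immediate from the definition $\tilde{U} = (B\tilde{u})^\wedge$ given in the statement.

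There is no genuine obstacle here; the calculation is a few lines. The only points requiring a little care are the sign and operator ordering in the inverse-derivative identity, and noting that $\Ad_{X_d}\tilde{U}$ indeed lies in $\se(2)$ — which holds because $\se(2)$ is invariant under the adjoint action of $\SE(2)$ — so that $E_R\,\Ad_{X_d}\tilde{U}$ is a well-defined tangent vector to $\SE(2)$ at $E_R$, consistent with the left-trivialised form of the dynamics.
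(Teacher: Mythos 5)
Your proposal is correct and follows essentially the same computation as the paper: differentiate $E_R = XX_d^{-1}$ with the product rule and the inverse-derivative identity, substitute $\dot{X} = XU$ and $\dot{X}_d = X_dU_d$, and regroup as $E_R\,\Ad_{X_d}\tilde{U}$ by inserting $X_d^{-1}X_d$. The remarks on sign, ordering, and $\Ad$-invariance of $\se(2)$ are consistent with the paper's argument.
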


\begin{proof}
        The identity \eqref{eq:error_dynamics} follows from straight-forward computation.
        Taking the time derivative of \eqref{eq:error_definition}:
        \begin{align*}
                \dot{E}_R &= \dot{X}X_d^{-1} - XX_d^{-1}\dot{X}_dX_d^{-1}\\
                        &= XUX_d^{-1} - XX_d^{-1}X_dU_dX_d^{-1}\\
                        &= XX_d^{-1}X_d(U - U_d)X_d^{-1}\\
                        &= E_R \Ad_{X_d}(U - U_d)\\
                        &= E_R \Ad_{X_d}\tilde{U}\\
                        &= E_R \Ad_{X_d}(B\tilde{u})^\wedge,
        \end{align*}
        as required.

\end{proof}

\begin{figure}[!tb]
        \includegraphics[width=0.7\linewidth]{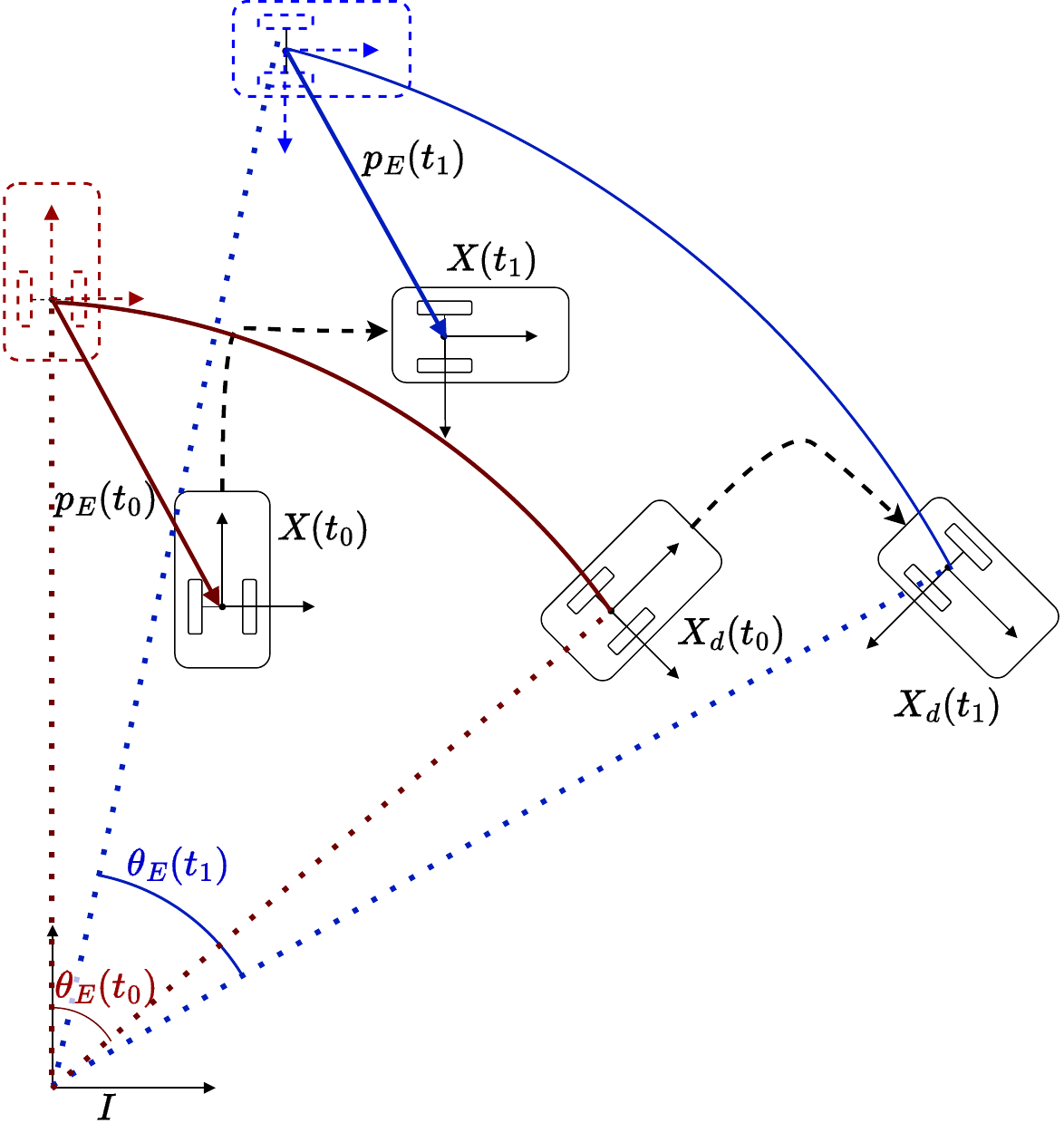}
        \centering
        \caption{Visualisation of the spatial (or right-invariant) error.  In both cases the $X(t_0)$ and $X(t_1)$ are rotated around the reference origin by 45 degrees and then translated (in the reference frame) in the $(1,-1)$ direction. }
        \label{fig:rightInvariantError}
\end{figure}

Setting $U = U_d$, then $\tilde{U} = 0$.
It follows that the feed-forward system is synchronous, that is, $\dot{E}_R = 0$.
This is visualised in  Figure \ref{fig:rightInvariantError}.
Note that since the error $\dot{E}_R = 0$ then $E_R(t_1) = E_R(t_0)$.
It is also clear from Figure \ref{fig:rightInvariantError} that constant spatial error does not correspond to a ``constant local distance'' between the desired trajectory and vehicle trajectory.
Studying the spatial error leads to simpler error dynamics but introduces a more complex interpretation of the meaning of the error, and in particular, introduces dependence on the reference frame choice.
Although $E_R = I_3$ guarantees that $X = X_d$ the dependence on $E_R$ on the reference frame means that analysing convergence is more complex.
The following Lemma is important in ensuring that convergence of $E_r(t) \to I_3$ ensures $X(t) \to X_d(t)$.

%
%
%

\begin{lemma}
        Let $X_d(t) \in \SE(2)$ be a trajectory satisfying \eqref{eq:desired_trajectory_dynamics} and let $X(t) \in \SE(2)$ be a trajectory satisfying \eqref{eq:lie_group_dynamics}.
        If $p_d $ is bounded then $E_R(t) \to \Id_3$ if and only if $X(t) \to X_d(t)$.
\end{lemma}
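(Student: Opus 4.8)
The plan is to reduce the statement to an elementary norm estimate relating $E_R-I_3$ directly to the tracking error $X-X_d$. Since $I_3 = X_d(t)\,X_d^{-1}(t)$ for all $t$, the definition $E_R = X X_d^{-1}$ gives the two exact identities
\begin{align*}
E_R - I_3 = X X_d^{-1} - X_d X_d^{-1} = (X - X_d)\, X_d^{-1}, \qquad X - X_d = (E_R - I_3)\, X_d ,
\end{align*}
so $E_R(t) - I_3$ and $X(t) - X_d(t)$ are obtained from one another simply by right multiplication by $X_d^{-1}(t)$, respectively $X_d(t)$. Everything then hinges on these two matrix factors being uniformly bounded in $t$.

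Next I would observe that the hypothesis "$p_d$ bounded" is exactly what delivers that uniform boundedness. Writing
\begin{align*}
X_d = \begin{pmatrix} R_d & p_d \\ 0_{1\times 2} & 1 \end{pmatrix}, \qquad X_d^{-1} = \begin{pmatrix} R_d^\top & -R_d^\top p_d \\ 0_{1\times 2} & 1 \end{pmatrix},
\end{align*}
the rotational blocks $R_d, R_d^\top \in \SO(2)$ are bounded automatically, while the translational blocks are $p_d$ and $-R_d^\top p_d$, the latter of norm $\|p_d\|$. Hence there is a constant $C \ge 1$, depending only on $\sup_{t\ge 0}\|p_d(t)\|$, with $\|X_d(t)\| \le C$ and $\|X_d^{-1}(t)\| \le C$ for all $t$, where $\|\cdot\|$ is the (submultiplicative) norm induced by the Frobenius inner product already used in the paper. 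Combining these bounds with the two identities gives $\|E_R(t) - I_3\| \le C\,\|X(t) - X_d(t)\|$ and $\|X(t) - X_d(t)\| \le C\,\|E_R(t) - I_3\|$ for every $t \ge 0$, from which $E_R(t) \to I_3 \iff X(t) \to X_d(t)$ is immediate.

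I do not anticipate a substantive obstacle: this is bookkeeping with a submultiplicative norm, and the only thing requiring care is to identify where the $p_d$-boundedness hypothesis is spent. It is genuinely needed for the implication $E_R \to I_3 \Rightarrow X \to X_d$, since if $\|p_d(t)\| \to \infty$ then $X_d^{-1}$ is unbounded and the second inequality fails; concretely, a vanishing rotational mismatch $R R_d^\top = R(\varepsilon(t))$ with $\varepsilon(t)\to 0$ contributes to the position component $p - R R_d^\top p_d$ of $E_R$ a term of order $|\varepsilon(t)|\,\|p_d(t)\|$, and one can arrange this product to stay bounded (so $E_R \to I_3$) while the true error $p - p_d$ diverges. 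Finally, I might remark in passing that the identical computation applied to the conjugation relation $E_R = X_d E_L X_d^{-1}$, with $E_L = X_d^{-1}X$ the body-fixed error, shows $E_R \to I_3 \iff E_L \to I_3$, connecting this lemma back to the classical left-invariant formulation.
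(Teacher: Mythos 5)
Your proof is correct, and it takes a cleaner route than the paper's. The paper argues componentwise: it fixes $\varepsilon>0$, uses compactness of $\SO(2)$ and boundedness of $p_d$ to get $M=\sup_t\lVert R_d\rVert$, $N=\sup_t\lVert p_d\rVert$, and then bounds $\lVert R-R_d\rVert$ and $\lVert p-p_d\rVert$ separately via triangle inequalities on the blocks of $E_R-I_3$, proving only the direction $E_R\to I_3 \Rightarrow X\to X_d$ explicitly and asserting the converse is ``similar.'' You instead use the factorisations $E_R-I_3=(X-X_d)X_d^{-1}$ and $X-X_d=(E_R-I_3)X_d$, together with the uniform bounds $\lVert X_d\rVert,\lVert X_d^{-1}\rVert\le C$ that follow from $R_d\in\SO(2)$ and $p_d$ bounded, to get the two-sided estimate $\frac{1}{C}\lVert X-X_d\rVert\le\lVert E_R-I_3\rVert\le C\lVert X-X_d\rVert$ by submultiplicativity. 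This packages both implications symmetrically in one line, makes explicit exactly where the hypothesis on $p_d$ is spent, and extends immediately (as you note) to $E_R\to I_3\iff E_L\to I_3$ via $E_R-I_3=X_d(E_L-I_3)X_d^{-1}$; the paper's blockwise argument buys nothing extra here beyond displaying the components. One small refinement to your aside: boundedness of $p_d$ is in fact needed for \emph{both} implications, not only $E_R\to I_3\Rightarrow X\to X_d$ --- the translational block of $E_R-I_3$ is $(p-p_d)+(I_2-RR_d^\top)p_d$, so with unbounded $p_d$ one can also have $X\to X_d$ (e.g.\ $p\equiv p_d$, $RR_d^\top\to I_2$ slowly) while $E_R\not\to I_3$.
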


\begin{proof}
First, assume that $\lim_{t \to \infty} E_R(t) = \Id_3$.
Let $\varepsilon > 0$ be given.
We need to show that there exists a $T > 0$ such that $\lVert R(t) - R_d(t) \rVert < \varepsilon$ and $\lVert p - p_d \rVert < \varepsilon$, for $t > T$.
The set $\SO(2)$ is compact, so $R_d(t)$ is bounded.
Let $M = \sup_t \lVert R_d(t) \rVert$ and $N = \sup_t \lVert p_d(t) \rVert$.
Then, because $\lim_{t \to \infty} E_R(t) = \Id_3$, there exists a $T > 0$ such that $\lVert R(t) R_d(t)^\top - \Id_2 \rVert < \min\{\frac{\varepsilon}{M}, \frac{\varepsilon}{2N}\}$ and $\lVert p(t) - R_E(t) p_d(t) \rVert < \frac{\varepsilon}{2}$ for $t > T$.
Therefore, the following inequalities hold:
\begin{align*}
        \lVert R - R_d \rVert &= \lVert (R - R_d)R_d^T R_d \rVert \\
                                &\leq \lVert (R - R_d)R_d^T \rVert \lVert R_d \rVert\\
                                &= \lVert E - I \rVert \lVert R_d \rVert\\
                                &< \frac{\varepsilon}{M} M = \varepsilon,
\end{align*}
and
\begin{align*}
        \lVert p - p_d \rVert &\leq \lVert p - R_E p_d \rVert + \lVert R_E p_d - p_d \rVert \\
                                &\leq \lVert p - R_E p_d \rVert + \lVert R_E - I_2 \rVert \lVert p_d \rVert\\
                                &< \frac{\varepsilon}{2} + \frac{\varepsilon}{2N} N = \varepsilon,
\end{align*}
so $X \to X_d$.
The converse statement is proved by a similar argument.
\end{proof}

\section{Constructive Lyapunov control on $\SE(2)$} \label{sec:ConstructiveLyapunov}

The synchrony property of the spatial error is a key property exploited in the design of the tracking controller.
Specifically, we choose a candidate Lyapunov function $\calL (E)$.
Then the correction term is generated by projecting the gradient of $\calL (E)$ onto the actuation directions of the vehicle.
Even if this projection leads to a null correction term, the synchrony of the error ensures that the Lyapunov function never increases.
We then depend on excitation of the reference trajectory to provide global asymptotic stability of the error dynamics.

We approach the controller design constructively: that is, we define a candidate Lyapunov function, and use this to derive a control law.
\begin{theorem}
Let $X_d(t) \in \SE(2)$ denote the desired system trajectory with desired input $u_d \in \R^2$ as in \eqref{eq:desired_trajectory_dynamics}, and let $X(t) \in \SE(2)$ denote the true system state.
Let $E = X X_d^{-1}$ denote the error \eqref{eq:error_definition} and define 
\begin{align*}
        S &= \diag(2,1,1) \in \R^{3\times3}.
\end{align*}
Define the control input delta $\tilde{u} \in \R^2$ by
\begin{align}
        \tilde{u} &= -B^\top \Ad_{X_d}^{\vee \top} S \mathbb{P}_{\se(2)}(E^\top E - E^\top)^\vee,
\label{eq:u_tilde}
\end{align}
or in components, as
\begin{align*}
        \tilde{\Omega} &= -2 \sin(\theta_E) - p_d^\top \mathbf{1}^\times R_E^\top p_E\\
        \tilde{v} &= -e_1^\top R_d^\top R_E^\top p_E. 
\end{align*}
Then, the identity $I$ is a stable equilibrium point of the error system $\dot{E}$.
Moreover, if $B^\top \Ad_{X_d(t)}^\top S$ is persistently exciting \eqref{eq:persistent_excitation} then:
\begin{enumerate}
        \item The error $E$ converges to $I$ almost-globally asymptotically.  The complement of the basin of attraction is the singleton set $\{\calE_u\}$, where 
        \begin{align}
        \calE_u =  \begin{pmatrix}
                R(\pi) & 0 \\ 0 & 1 \end{pmatrix}.
        \end{align}
        \item The error $E$ converges to $I$ locally exponentially.
\end{enumerate}
\end{theorem}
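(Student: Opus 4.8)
The plan is first to recognise that the control \eqref{eq:u_tilde} is the negative gradient, restricted to the actuated directions $\image B$, of
\begin{align*}
\Lyap(E) \;=\; \tfrac12\,\lVert E-\Id_3\rVert_{\mathbb F}^2 \;=\; 2\bigl(1-\cos\theta_E\bigr)+\tfrac12\lVert p_E\rVert^2 .
\end{align*}
Writing $w(E):=\mathbb P_{\se(2)}(E^{\top}E-E^{\top})^{\vee}$, which one computes to be $(\sin\theta_E,\ R_E^{\top}p_E)^{\top}\in\R^3$, and differentiating $\Lyap$ along \eqref{eq:error_dynamics} using that $\mathbb P_{\se(2)}$ is the $\langle\cdot,\cdot\rangle_{\mathbb F}$-orthogonal projection onto $\se(2)$ together with $\langle x^{\wedge},y^{\wedge}\rangle_{\mathbb F}=\langle Sx,y\rangle$, one obtains
\begin{align*}
\dot\Lyap \;=\; \bigl\langle E^{\top}E-E^{\top},\ \Ad_{X_d}(B\tilde u)^{\wedge}\bigr\rangle_{\mathbb F} \;=\; \bigl\langle B^{\top}\Ad_{X_d}^{\vee\top}S\,w(E),\ \tilde u\bigr\rangle \;=\; -\lVert\tilde u\rVert^2 \;\le\; 0 .
\end{align*}
Since $\Lyap\ge 0$ with equality only at $E=\Id_3$, and its sublevel sets are compact ($\Lyap$ is radially unbounded in $p_E$ and $\SO(2)$ is compact), $\Id_3$ is a stable equilibrium and every solution is bounded and defined for all $t\ge 0$; this is the first claim.

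\textbf{Step 2: convergence to $\{\Id_3,\calE_u\}$ under excitation.}
Integrating $\dot\Lyap=-\lVert\tilde u\rVert^2$ gives $\tilde u\in L^2$. Setting $F(t):=B^{\top}\Ad_{X_d(t)}^{\vee\top}S$, so that $\tilde u=-F\,w(E)$, I would check that $\dot{\tilde u}$ is bounded — this needs $p_d$ and $u_d$ bounded, so that $\Ad_{X_d}$ and $\ddt\Ad_{X_d}$ are bounded — whence Barbalat's lemma yields $\tilde u(t)\to0$. Then $\Ad_{X_d}^{\vee}(B\tilde u)\to0$, so $\dot\theta_E\to0$, $\dot p_E\to0$, and hence $\dot w\to0$. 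Now invoke \eqref{eq:persistent_excitation}: for $\tau\in[t,t+T]$ write $w(\tau)=w(t)+\int_t^{\tau}\dot w$, so that $\lVert F(\tau)w(\tau)\rVert^2\ge\tfrac12\lVert F(\tau)w(t)\rVert^2-C^2T^2\sup_{s\ge t}\lVert\dot w(s)\rVert^2$; integrating over $[t,t+T]$ and using persistency of excitation,
\begin{align*}
\lVert w(t)\rVert^2 \;\le\; \frac{2}{\varepsilon}\int_t^{t+T}\lVert F\,w\rVert^2\,\dtau \;+\; \frac{2C^2T^3}{\varepsilon}\,\sup_{s\ge t}\lVert\dot w(s)\rVert^2 .
\end{align*}
Both terms vanish as $t\to\infty$ (the first is a tail of the convergent integral $\int_0^{\infty}\lVert\tilde u\rVert^2\,\dt$), so $w(E(t))\to0$; equivalently $E(t)$ approaches the zero set of $w$, which is exactly $\{\Id_3,\calE_u\}$. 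As this set is discrete and $\omega$-limit sets of bounded trajectories are connected, $E(t)$ converges to $\Id_3$ or to $\calE_u$.

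\textbf{Step 3: the basin of attraction.}
Since $\Lyap$ is non-increasing with $\Lyap(\Id_3)=0$ and $\Lyap(\calE_u)=4$: any $E_0$ with $\Lyap(E_0)<4$ converges to $\Id_3$; and if $\Lyap(E_0)=4$ with $E_0\ne\calE_u$, then $\Lyap$ cannot stay equal to $4$ on a time interval (that would force $\tilde u\equiv0$, hence $E\equiv E_0$ an equilibrium, and the equilibria are precisely $\{\Id_3,\calE_u\}$), so $\Lyap$ drops below $4$ and $E_0$ again converges to $\Id_3$. Hence the complement of the basin of $\Id_3$ equals the set of initial conditions converging to $\calE_u$, and it remains to show this set is $\{\calE_u\}$. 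For this I would linearise the closed loop at $\calE_u$: in coordinates $E=\calE_u\exp(\zeta^{\wedge})$, $\zeta=(\phi,q)\in\R\times\R^2$, one finds $\Lyap-4=-\phi^2+\tfrac12\lVert q\rVert^2+O(\lvert\zeta\rvert^3)$ and $\dot\zeta=-G(t)G(t)^{\top}SL\,\zeta+O(\lvert\zeta\rvert^2)$ with $G:=\Ad_{X_d}^{\vee}B$ and $L:=\diag(-1,1,1)$, so that $SL=\diag(-2,1,1)$ and, in particular, the heading mode obeys $\dot\phi=2\phi+O(\lvert\zeta\rvert)$, which is anti-stable. The plan is then to combine this anti-stable mode with the persistency of excitation of $FL$ (still persistently exciting, since $L^2=\Id_3$) to rule out any nontrivial trajectory converging to $\calE_u$, dominating the higher-order terms in a neighbourhood of $\calE_u$. \textbf{I expect this last step to be the main obstacle}: the linearised system is time-varying with $G(t)G(t)^{\top}$ rank-deficient at every instant, so the instability of $\calE_u$ — and the absence of a nontrivial stable manifold — has to be extracted from the integrated excitation condition rather than from a pointwise spectral argument, and this is the delicate part of the proof.

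\textbf{Step 4: local exponential stability at $\Id_3$.}
Linearising instead at $\Id_3$ ($E=\exp(\zeta^{\wedge})$) gives $\dot\zeta=-G(t)G(t)^{\top}S\,\zeta+O(\lvert\zeta\rvert^2)$; after the fixed positive-definite change of coordinates by $S^{1/2}$ this is a gradient-descent flow driven by the persistently exciting regressor $F=G^{\top}S$, which is uniformly exponentially stable by the standard persistency-of-excitation estimate. Since the linearisation is uniformly exponentially stable with a quadratic Lyapunov function, Lyapunov's indirect method for time-varying systems yields local exponential stability of $\Id_3$ for the full closed loop, which is the second claim.
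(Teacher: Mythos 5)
Your Steps 1, 2 and 4 follow essentially the same route as the paper: the identical Lyapunov function $\calL(E)=\tfrac12\lVert E-\Id_3\rVert_{\mathbb F}^2$, the same computation giving $\dot{\calL}=-\lVert\tilde u\rVert^2$, Barbalat's lemma plus a persistency-of-excitation estimate to conclude $w(E(t))=\mathbb{P}_{\se(2)}(E^\top E-E^\top)^\vee\to 0$ (your direct bound with $w(\tau)=w(t)+\int_t^\tau\dot w$ is, in substance, the paper's appendix Lemma on persistent excitation, and like the paper it relies on the implicitly assumed boundedness of $p_d$ and $u_d$), and, for claim (2), a linearisation at the identity reduced to a symmetric positive semi-definite time-varying gradient flow whose uniform exponential stability follows from persistency of excitation (the paper invokes Morgan--Narendra and Hale for exactly this step). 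Up to the choice of coordinates ($\mu=(2\theta_E,p_E)$ in the paper versus exponential coordinates scaled by $S^{1/2}$ in your write-up), these parts match.

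The genuine gap is in your Step 3, and you flag it yourself: you never complete the argument that no trajectory other than the constant one converges to $\calE_u$, i.e.\ that the complement of the basin of attraction is exactly the singleton. Your sublevel argument only disposes of initial conditions with $\calL(E_0)\le 4$; for $\calL(E_0)>4$ you propose linearising at $\calE_u$, extracting the anti-stable heading mode and combining it with excitation to exclude a nontrivial stable set, but you leave this as an acknowledged obstacle, so as written claim (1) is not proved in full. For comparison, the paper attempts nothing of the sort: after establishing convergence to $\{\Id_3,\calE_u\}$ it simply notes that $\calL(\calE_u)=4$ while every neighbourhood of $\calE_u$ contains points of strictly smaller Lyapunov value, declares $\calE_u$ unstable (since $\calL$ is non-increasing, such points can never converge to $\calE_u$), and on that basis concludes claim (1). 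Strictly speaking this short argument also only excludes trajectories whose Lyapunov value is, or eventually drops, below $4$, so the difficulty you identify with approaches from the region $\{\calL>4\}$ is real and is glossed over by the paper as well; but the paper's accepted resolution is that one-line instability observation, not the time-varying stable-manifold analysis you anticipate. To match the paper you only need to add that observation; to do what you actually promise in Step 3 you would need to carry out the PE-based domination argument you sketch, which is currently missing.
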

\begin{proof}
We begin the proof of claim (1) by defining the candidate Lyapunov function
\begin{align}
\mathcal{L} (E) = \frac{1}{2} \langle E - I, E - I \rangle_{\mathbb{F}} 
= \tr\left( (E - I)^\top (E - I) \right),
\label{eq:lyapunov_function}
\end{align}
using the matrix or Frobenius inner product. 
Differentiating yields 
\begin{align*}
        \dot{\mathcal{L}} (t) = \langle E-I, \dot{E} \rangle_{\mathbb{F}} = \langle E-I, E\Ad_{X_d} (B\tilde{u})^\wedge \rangle_{\mathbb{F}}.
\end{align*}
Rearranging terms inside the Frobenius inner product yields 
\begin{align*}
\langle E-I, E\Ad_{X_d} (B\tilde{u})^\wedge \rangle_{\mathbb{F}} = \langle E^\top(E-I), \Ad_{X_d} (B\tilde{u})^\wedge \rangle_{\mathbb{F}}.
\end{align*}
For all $X_d \in \SE(2)$ and $\tilde{U} = (B\tilde{u})^\wedge  \in \se(2)$ then $\Ad_{X_d} \tilde{U} \in \se(2)$.
Since the right-hand side lies in  $\se(2)$ the inner product is unchanged if the left-hand side is projected onto $\se(2)$.
That is,
\begin{align*}
        \dot{\mathcal{L}} (t) = \langle \mathbb{P}_{\se(2)}(E^\top E-E^\top), \Ad_{X_d} (B\tilde{u})^\wedge \rangle_{\mathbb{F}}.
\end{align*}
Note that for the Frobenius inner product 
\[
\langle A, B \rangle_{\mathbb{F}} = \tr(A^\top B) 
=  (A^\vee)^\top S B^\vee 
= \langle SA^\vee, B^\vee \rangle 
\]
since the skew-symmetric components of $\mathbf{1}^\times$ get counted twice in the Frobenius inner product.  
Thus,
\begin{align*}
\dot{\cal{L}}(E) &= \langle \mathbb{P}_{\se(2)}(E^\top E-E^\top), \Ad_{X_d} (B\tilde{u})^\wedge \rangle_{\mathbb{F}} \\
&= \langle S \mathbb{P}_{\se(2)}(E^\top E-E^\top)^\vee, \Ad_{X_d}^{\vee} B\tilde{u} \rangle\\
&= \langle B^\top \Ad_{X_d}^{\vee \top} S \mathbb{P}_{\se(2)}(E^\top E-E^\top)^\vee, \tilde{u} \rangle.
\end{align*}
Substituting the control \eqref{eq:u_tilde} yields 
\begin{align*}
        \dot{\cal{L}}(E) = -\lVert B^\top \Ad_{X_d}^{\vee \top} S \mathbb{P}_{\se(2)}(E^\top E-E^\top)^\vee \rVert^2. 
\end{align*}
This shows that $\calL$ is non-increasing and proves the error dynamics are stable.

Since $\calL$ is non-increasing then $\calL (t) \leq \calL (0)$ for all $t$ and $\calL (t) \to c$ for some positive constant $c$.
Rewriting $\calL (t)$ as $\calL(t) = 2(1 - \cos \theta_E) + \frac{1}{2} \lVert p_E \rVert^2$ implies that $p_E$ is bounded.
For notational convenience, define
\begin{align*}
    A(t) &\coloneqq B^\top \Ad_{X_d(t)}^{\vee \top} S & \in \R^{2 \times 3}\\
    C(t) &\coloneqq \mathbb{P}_{\se(2)}(E(t)^\top E(t)-E(t)^\top)^\vee & \in \R^3 
\end{align*}
and note that $\dot{\calL} = -\lVert A(t) C(t)\rVert^2$.
Taking the second derivative of $\calL$ and using the Cauchy-Schwarz inequality and submultiplicativity of matrix norms one has
\begin{align*}
\lVert \ddot{\cal{L}} (t) \rVert 
&= 2  \langle  A(t)C(t),
\dot{A}(t)C(t) + A(t)\dot{C}(t)  \rangle \\
& \leq 2 \lVert A(t)C(t) \rVert \lVert \dot{A}(t)C(t) + A(t)\dot{C}(t) \rVert \\
& \leq 2 \lVert A(t) \rVert \lVert C(t) \rVert 
\left(\lVert \dot{A}(t) \rVert  
\lVert C(t) \rVert + \lVert A(t) \rVert  \lVert \dot{C}(t) \rVert \right).
\end{align*}
Thus, in order to show that $\ddot{\cal{L}} (t)$ is bounded, it suffices to show that each of $A(t), C(t), \dot{A}(t)$ and $\dot{C}(t)$ is bounded.
We have:
\begin{align*}
        A(t) &=  \begin{pmatrix} 2 & p_d(t)^\top \mathbf{1}^\times \\ 0 & e_1^\top R_d(t)^\top  \end{pmatrix},\\
        \dot{A}(t) &= \begin{pmatrix} 0 & v_d^\top R_d^\top \mathbf{1}^\times \\ 0 & -e_1^\top \Omega_d^\times R_d^\top  \end{pmatrix},   \\
        C(t) &= \begin{pmatrix} \sin \theta_E(t) \\ R_E(t)^\top p_E(t) \end{pmatrix},\\
        \dot{C}(t) &= \begin{pmatrix} \cos \theta_E(t) \tilde{\Omega} \\ -\tilde{\Omega}^\times \left(R_E(t)^\top p_E(t) + p_d(t) \right) + R_d(t) \tilde{v}(t) \end{pmatrix},\end{align*}
and so
\begin{align*}
        \lVert A(t) \rVert^2 &= 5 + \lVert p_d \rVert^2,\\
        \lVert \dot{A}(t) \rVert^2 &=  \lVert v_d \rVert^2  +  \lVert e_1^\top \Omega_d^\times \rVert^2\\
                                  &\leq \lVert v_d \rVert^2  + 2\Omega_d^2\\
        \lVert C(t) \rVert^2 &= \sin^2 \theta_E(t) + \lVert p_E \rVert^2,\\
        \lVert \dot{C}(t) \rVert^2 &= \cos^2 \theta_E(t) \tilde{\Omega}^2 + \lVert -\tilde{\Omega}^\times \left(R_E(t)^\top p_E(t) + p_d(t) \right) + R_d(t) \tilde{v}(t) \rVert^2\\
                         & \leq \tilde{\Omega} ^ 2 + 2 \tilde{\Omega}^2( \lVert p_E(t) \rVert + \lVert p_d(t) \rVert) + \lVert \tilde{v} \rVert^2.
\end{align*}
Each of $x_d, v_d, \Omega_d$ and $x_E$ is bounded, so $A(t)$, $\dot{A}(t)$ and $C(t)$ are bounded, and hence $\tilde{U} = A(t)C(t)$ is bounded.
Thus, $\tilde{\Omega}$ and $\tilde{v}$ are bounded and so $\dot{C}(t)$ is also bounded.
Thus, by Barbalat's lemma (\cite{slotine1991applied}), $\dot{\calL}(t) \to 0$.
This also implies that $\tilde{U} = A(t)C(t)$ approaches zero, and so $\dot{E} \to 0$ and $\dot{C}(t) = \frac{\td}{\td t}\mathbb{P}_{\se(2)}(E(t)^\top E(t)-E(t)^\top)^\vee \to 0$.


Since $A(t)$ is persistently exciting (Lemma \ref{lemma:persistent_excitation}), it now follows that $C(t) \coloneqq \mathbb{P}_{\se(2)}(E(t)^\top E(t)-E(t)^\top)^\vee \to 0$.
The map $E \mapsto \mathbb{P}_{\se(2)}(E^\top E-E^\top)^\vee$ can be expressed in components as $(\sin \theta_E, R_E^\top p_E)$, so $p_E \to 0$ for all $p_E$ and $\sin \theta_E \to 0$.
Hence $\theta_E$ asymptotically approaches either $0$ or $\pi$.

To see that the point $(\pi, 0)$ is an unstable equilibrium point, it suffices to show that every neighbourhood of $(\pi, 0)$ contains a point $q$ with $\calL (q) < \calL (\pi, 0)$.
Every neighborhood of $(\pi, 0)$ contains a point $(\pi - \varepsilon, 0)$ for small enough $\varepsilon > 0$.
In general, $\calL (\theta, p) = 2(1 - \cos(\theta)) + \frac{1}{2} \lVert p \rVert^2$, so $\calL (\pi, 0) = 4$.
At a point $(\pi - \varepsilon, 0)$, $\calL (\pi - \varepsilon, 0) = 2(1 - \cos(\pi - \varepsilon)) = 2(1 +\varepsilon) < 4$.
Thus, the equilibrium $(\pi, 0)$ is unstable.
This proves claim (1). 

To prove claim (2), define the local coordinates $\mu = (2\theta_E, p_E)$ and define
\begin{align*}
P(t) \coloneqq \begin{pmatrix}
        1 & p_d^\top \mathbf{1}^\times \\
        -\mathbf{1}^\times p_d & -\mathbf{1}^\times p_d p_d^\top \mathbf{1}^\times + R_d e_1 e_1^\top R_d^\top
        \end{pmatrix}.
\end{align*}
The linearisation of the error dynamics around $\mu = (0, 0)$ can be computed to be:
\begin{align}
\begin{pmatrix}
\dot{\mu}_\theta \\ \dot{\mu}_p
\end{pmatrix} &=
- P(t) \begin{pmatrix}
        \mu_\theta \\ \mu_p
        \end{pmatrix} \label{eq:lin_sys}\\
&= -\begin{pmatrix} 1 & 0 \\ 0 & 2 \Id \end{pmatrix} \Ad_{X_d}^\vee B B^\top \Ad_{X_d}^{\vee \top} \begin{pmatrix} \frac{1}{4} & 0 \\ 0 & \frac{1}{2} \Id \end{pmatrix} \begin{pmatrix}
        \mu_\theta \\ \mu_p
        \end{pmatrix} \notag \\
&= -4 \begin{pmatrix} \frac{1}{4} & 0 \\ 0 & \frac{1}{2} \Id \end{pmatrix} \Ad_{X_d}^\vee B B^\top \Ad_{X_d}^{\vee \top} \begin{pmatrix} \frac{1}{4} & 0 \\ 0 & \frac{1}{2} \Id \end{pmatrix} \begin{pmatrix}
        \mu_\theta \\ \mu_p
        \end{pmatrix} \notag.
\end{align}

By assumption, $B^\top \Ad_{X_d(t)}^{\vee \top}$ is persistently exciting, so there exists some lower bound $\varepsilon > 0$ such that
\begin{align*}
        \int_t^{t+T} \Ad_{X_d}^\vee B B^\top \Ad_{X_d}^{\vee \top} \td \tau \geq \varepsilon \Id.
\end{align*}
The smallest eigenvalue of the matrix $\begin{pmatrix} \frac{1}{4} & 0 \\ 0 & \frac{1}{2} \Id \end{pmatrix}\begin{pmatrix} \frac{1}{4} & 0 \\ 0 & \frac{1}{2} \Id \end{pmatrix}$ is $\frac{1}{16}$, so it also holds that
\begin{align*}
\int_t^{t+T} P(\tau) \td \tau &\geq \frac{1}{4} \varepsilon \Id.
\end{align*}
As the matrix $P(t)$ is also positive semi-definite and symmetric, by Lemma \ref{prop:excited_linear}, the linearised system \eqref{eq:lin_sys} is uniformly exponentially stable.
\end{proof}

\section{Simulation Experiments} \label{sec:Simulation}

In order to empirically verify that the controller works, the system is implemented in simulation for an elliptical trajectory.
In general, an elliptical trajectory has the form
\begin{align}
        p_d(t) = \begin{pmatrix}
        a \cos(ht) \\ b \sin(ht) \label{eq:ellipse}
\end{pmatrix},
\end{align}
implying that $\Omega_d(t) = h, \theta_d(t) = ht$.
\subsection{Excitation of elliptical trajectories}
For an elliptical trajectory of the form \eqref{eq:ellipse}, direct computation shows that
\begin{align*}
        \int_t^{t + \frac{2 \pi}{h}} S^\top \Ad_{X_d(\tau)}^{\vee} B  & B^\top \Ad_{X_d(\tau)}^{\vee \top} S \td \tau = \begin{pmatrix}
                \frac{8 \pi}{h} & 0 & 0\\
                0 & \frac{(b^2 + 1)\pi}{h}& 0\\
                0 & 0 & \frac{(a^2 + 1)\pi}{h}
        \end{pmatrix} \geq \varepsilon \Id,
\end{align*}
where $\varepsilon$ is any positive number satisfying $ \varepsilon \leq \frac{\pi}{h}\min\{8, a^2 + 1, b^2 + 1\}$.
Therefore, the ellipse is persistently exciting and can be stabilised with the proposed controller.

\subsection{Results}
The specific trajectory to be tracked is given by:
\begin{align*}
        p_d(t) = \begin{pmatrix}
        3 \cos(\frac{2\pi}{5}t) \\  5 \sin(\frac{2\pi}{5}t)
\end{pmatrix}.
\end{align*}

The corresponding inputs $v_d, \Omega_d$ can be recovered by
\begin{align*}
        v_d(t) &= \lVert \dot{p}_d \rVert = \sqrt{9 \cos^2 (\frac{2\pi}{5}t) + 25 \sin^2(\frac{2\pi}{5}t)},\\
        \Omega_d(t) &= \frac{2 \pi}{5}.
\end{align*}

The simulation is run in two configurations: firstly, with the origin of the ellipse at $(0, 0)$, and then with the origin at $(3, 3)$.
In both cases, the initial system state is perturbed to $p(0) = p_d(0) + ( 3.0, - 2.0), \theta(0) = \theta_d(0) + \frac{\pi}{2}$, so that initial relative error is the same.
The simulation is repeated for the controllers proposed in \cite{kanayamaStableTrackingControl1990} and \cite{rodriguez-cortesNewGeometricTrajectory2022} in order to obtain comparative results.
The simulation results are shown in Figure \ref{fig:centered} for the choice of a $(0, 0)$ origin and in Figure \ref{fig:offset} for the choice of a $(3, 3)$ origin.

\begin{figure}[!tb]
        \includegraphics[width=0.7\linewidth]{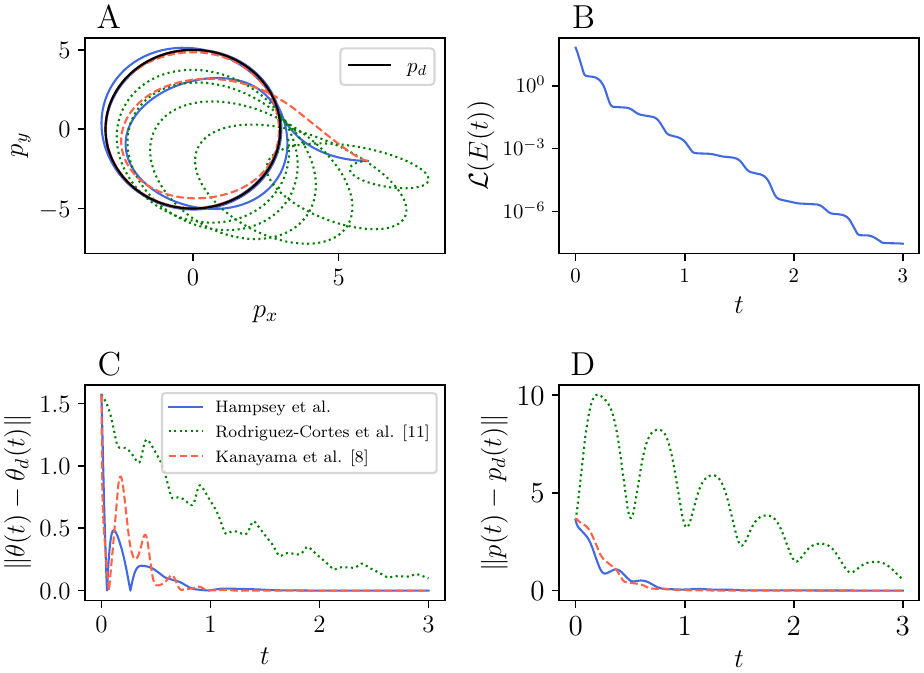}
        \centering
        \caption{Results with ellipse origin at (0, 0). (a) Desired (black) and actual position trajectories for the proposed controller and controllers from literature. (b) Lyapunov function vs time for the proposed controller. (c) Heading error vs time for the proposed controller and other controllers from literature. (d) Position error vs time for the proposed controller and other controllers from literature.}
        \label{fig:centered}
\end{figure}

\begin{figure}[!tb]
        \includegraphics[width=0.7\linewidth]{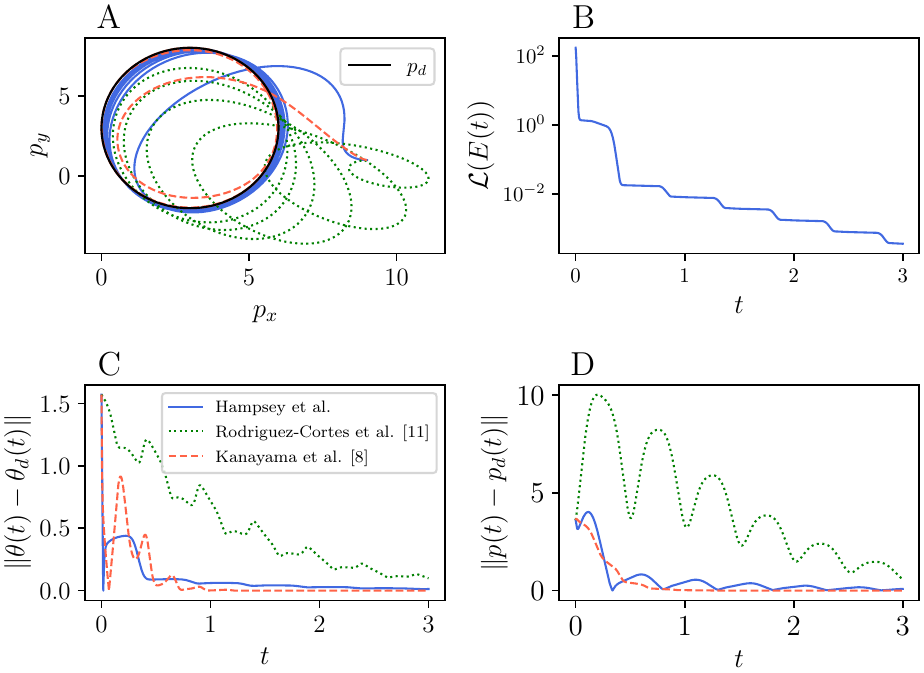}
        \centering
        \caption{Results with ellipse origin at (3, 3). (a) Desired (black) and actual position trajectories for the proposed controller and controllers from literature. (b) Lyapunov function vs time for the proposed controller. (c) Heading error vs time for the proposed controller and other controllers from literature. (d) Position error vs time for the proposed controller and other controllers from literature.}
        \label{fig:offset}
\end{figure}

\subsection{Discussion}
The simple simulations provided (Figure~\ref{fig:centered}, Figure~\ref{fig:offset}) verify the proposed control design.
In both cases, the Lyapunov function is globally non-increasing and is linear in logarithmic coordinates, showing the local exponential stability.
When the ellipse is centered on $(0, 0)$ (Figure~\ref{fig:centered}), we note that the performance of the proposed controller is similar to other controllers reported in the literature (\cite{kanayamaStableTrackingControl1990}, \cite{rodriguez-cortesNewGeometricTrajectory2022}), although this is dependent on gain tuning.
It is interesting to note that although the spatial error may appear unconventional (Figure~\ref{fig:rightInvariantError}), the convergence of the trajectory appears quite natural when the ellipse is centered on $(0, 0)$.
Comparing Figure~\ref{fig:centered} to the results with an ellipse centered on $(3, 3)$ (Figure~\ref{fig:offset}), the effect of the dependence of the spatial error on the inertial frame becomes evident.
Here, our proposed controller takes a longer time to converge in both heading and position, and follows a noticeably different trajectory (although the performance is still comparable with the other used controllers).
In contrast, the controllers from literature, which use a body-fixed error, follow the exact same trajectory.
For our proposed controller, the fact that the trajectory depends on the choice of origin with respect to which the error is computed opens an interesting research direction in how to choose a good origin for a particular trajectory tracking problem.

\section{Conclusion}

The choice of error for systems on Lie groups is not merely a matter of taste, but results in different error systems.
The properties of the emergent error dynamics are closely tied to the physical meaning of the error choice.
We have provided physical interpretations of the body-fixed and spatial, or left and right-invariant errors for a rigid body with an $\SE(2)$ symmetry.
We have shown that, for the kinematic unicycle, the choice of the spatial error leads to a simple gradient-based controller design.
We have proved the almost-global asymptotic convergence of this control scheme for a class of persistently exciting trajectories.
Finally, we have verified the effectiveness of this control in simulation.

\addtolength{\textheight}{-12cm}   



\section*{APPENDIX}

\begin{lemma}
        Let $A : \R_{\geq 0} \to \R^{n \times m}$ and $C: \R_{\geq 0} \to \R^{m}$ be bounded functions. If $A(t)C(t) \to 0$ and $\dot{C}(t) \to 0$ as $t \to \infty$, and $A(t)$ is persistently exciting, then $C(t) \to 0$.
        \label{lemma:persistent_excitation}
\end{lemma}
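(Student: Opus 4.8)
The plan is to run the standard persistence-of-excitation estimate on a sliding window of fixed width $T$, using the hypothesis $\dot C \to 0$ to treat $C$ as essentially constant over such a window. First I would record the elementary consequence of persistent excitation: for any fixed $t$,
\[
\int_t^{t+T} \lVert A(\tau) C(t) \rVert^2 \td\tau = C(t)^\top \left( \int_t^{t+T} A(\tau)^\top A(\tau) \td\tau \right) C(t) \geq \varepsilon \lVert C(t) \rVert^2,
\]
where $\varepsilon, T > 0$ are the excitation constants from \eqref{eq:persistent_excitation}. Hence it suffices to show that the left-hand integral tends to $0$ as $t \to \infty$, since then $\lVert C(t) \rVert^2 \to 0$.

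The key step is to decompose, for $\tau \in [t, t+T]$,
\[
A(\tau) C(t) = A(\tau) C(\tau) + A(\tau)\bigl(C(t) - C(\tau)\bigr),
\]
and bound the two pieces uniformly over the window. For the first piece, $A(\tau)C(\tau) \to 0$ by hypothesis, so $\sup_{\tau \geq t} \lVert A(\tau) C(\tau) \rVert \to 0$. For the second, since $C(t) - C(\tau) = -\int_t^\tau \dot C(s)\, \td s$, we get $\lVert C(t) - C(\tau) \rVert \leq T \sup_{s \geq t} \lVert \dot C(s) \rVert$ for all $\tau \in [t,t+T]$, which tends to $0$ because $\dot C \to 0$; multiplying by $M \coloneqq \sup_\tau \lVert A(\tau) \rVert < \infty$ (here the boundedness of $A$ is used) gives $\lVert A(\tau)\bigl(C(t)-C(\tau)\bigr) \rVert \leq M T \sup_{s \geq t}\lVert \dot C(s)\rVert \to 0$, uniformly in $\tau \in [t, t+T]$.

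Combining the two bounds yields $\delta(t) \coloneqq \sup_{\tau \in [t,t+T]} \lVert A(\tau)C(t) \rVert \to 0$ as $t \to \infty$, so
\[
\int_t^{t+T} \lVert A(\tau) C(t) \rVert^2 \td\tau \leq T\, \delta(t)^2 \to 0,
\]
and the persistent excitation lower bound then forces $\lVert C(t) \rVert^2 \to 0$, i.e.\ $C(t) \to 0$, as claimed. I do not expect a genuine obstacle here — this is a routine argument — but the one point requiring care is that every convergence must be made \emph{uniform} over the sliding window $[t, t+T]$ of fixed width, which is precisely what licenses replacing $C(\tau)$ by $C(t)$ inside the excitation integral; the boundedness of $A$ enters exactly at this splice.
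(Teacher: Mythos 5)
Your proof is correct and follows essentially the same sliding-window argument as the paper's: freeze $C$ at time $t$ over one excitation window of width $T$, control the replacement error through $C(t)-C(\tau) = -\int_t^\tau \dot{C}(s)\,\td s$, and invoke the persistent-excitation lower bound on the resulting quadratic form $C(t)^\top \bigl( \int_t^{t+T} A(\tau)^\top A(\tau)\, \td\tau \bigr) C(t)$. The only difference is organizational: by bounding $\lVert A(\tau)C(t)\rVert$ directly with the triangle inequality you never need the boundedness of $C$ itself, whereas the paper's estimate uses $L = \sup_t \lVert C(t)\rVert$ as a factor.
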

\begin{proof}
        Let $T > 0$ be such that the persistent excitation condition holds.
        We first show that $\lim_{t \to \infty}  C(t)^\top \int_t^{t + T} A(\tau)^\top A(\tau)C(\tau) \td \tau$ = 0.
        Let $L = \sup_t \lVert C(t) \rVert$ and $M = \sup_t \lVert  A(t) \rVert$.
        Given $\varepsilon > 0$, let $T' > 0$ be large enough such that $\lVert A(t)C(t) \rVert < \frac{\varepsilon}{LMT}$ for $t > T'$.
        Then for $t > T'$, it holds that
        \begin{align*}
                \lVert C(t)^\top  \int_t^{t + T} &A(\tau)^\top A(\tau)C(\tau) \td \tau \rVert \leq \lVert C(\tau)^\top \rVert  \int_t^{t + T} \lVert A(\tau)^\top\rVert  \lVert A(\tau)C(\tau) \rVert \td \tau < L \int_t^{t + T} M \frac{\varepsilon}{LMT} = \varepsilon,
        \end{align*}
        so $C(t)^\top \int_t^{t + T} A(\tau)^\top A(\tau)C(\tau) \td \tau \to 0$.
        On the other hand, by noting that for $\tau \in [t, t+T]$, $C(\tau) = C(t) + \int_t^\tau \dot{C}(s) \td s$, it also holds that
        \begin{align*}
                C(t)^\top \int_t^{t + T} &A(\tau)^\top A(\tau)C(\tau) \td \tau =C(t)^\top \left( \int_t^{t + T} A(\tau)^\top A(\tau) \td \tau \right) C(t)  +C(t)^\top \int_t^{t + T} A(\tau)^\top A(\tau)\int_t^\tau \dot{C}(s) \td s \td \tau.
        \end{align*}
        The term $C(t)^\top \int_t^{t + T} A(\tau)^\top A(\tau)\int_t^\tau \dot{C}(s) \td s \td \tau$ can also be shown to approach 0 as $t \to \infty$:
        let $L$ and $M$ be defined as before and given $\varepsilon > 0$, let $T' > 0$ be large enough such that $\lVert \dot{C}(t) \rVert < \frac{2\varepsilon}{LM^2T^2} $ for $t > T'$.
        Then for $t > T'$, it holds that
        \begin{align*}
                \lVert C(t)^\top &\int_t^{t + T} A(\tau)^\top A(\tau)\int_t^\tau \dot{C}(s) \td s \td \tau  \rVert < L \int_t^{t + T} M^2 \int_t^\tau \frac{2\varepsilon}{LM^2T^2} \td s \td \tau = LM^2 \frac{2\varepsilon}{LM^2T^2} \int_t^{t + T} (\tau - t) \dtau = \varepsilon,
        \end{align*}
        so $\int_t^{t + T} A(\tau)^\top A(\tau)\int_t^\tau \dot{C}(s) \td s \td \tau \to 0$.
        Thus, the remaining term of the integral, $C(t)^\top \left( \int_t^{t + T} A(\tau)^\top A(\tau) \td \tau \right) C(t)$, must also approach zero as $t \to \infty$.
        But because $A(t)$ is exciting, the integral $\int_t^{t + T} A(\tau)^\top A(\tau) \td \tau$ is lower bounded by $\alpha \Id$, for some $\alpha > 0$, for all $t$.
        So
        \begin{align*}
                0 \leq \alpha \lVert C(t) \rVert^2 \leq C(t)^\top \left( \int_t^{t + T} A(\tau)^\top A(\tau) \td \tau \right) C(t)
        \end{align*} for all $t$ and hence $C(t) \to 0$.
    \end{proof}

\begin{proposition}
        Consider the linear time-varying system
        \begin{align}
        \dot{x} = -A(t) x, \label{eq:lin_sys_2}
        \end{align}
        with $A(t) \in \R^{n \times n}$ symmetric and positive semi-definite.
        Assume that there exist a $T > 0$ and $\varepsilon > 0$ such that for all $t$,
        \begin{align}
        \int_t^{t+T} A(\tau) \td \tau \geq \varepsilon \Id. \label{eq:excited_linear}
        \end{align}
        Then \eqref{eq:lin_sys_2} is uniformly exponentially stable at $x = 0$.
        \label{prop:excited_linear}
\end{proposition}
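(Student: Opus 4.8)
The plan is to use the simplest quadratic Lyapunov function $V(x)=\tfrac12\|x\|^2$ and to upgrade the (easy) non-increase of $V$ to a strict contraction over every time window of length $T$. Since $A(t)$ is symmetric and positive semi-definite, $\dot V = -x^\top A(t)x \le 0$, so along any solution $\|x(\cdot)\|$ is non-increasing; this already gives uniform stability, uniform boundedness, and the energy identity $\|x(t)\|^2-\|x(t+T)\|^2 = 2\int_t^{t+T} x(\tau)^\top A(\tau)x(\tau)\,\td\tau$, which is the quantity the excitation hypothesis must control. The whole content of the proposition is that the integral term does not degenerate.

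The core step is to produce a $\rho\in(0,1)$, independent of $t$, with $\|x(t+T)\|\le\rho\|x(t)\|$. I would argue by contradiction: fix $\rho$ near $1$ and suppose $\|x(t+T)\|^2>\rho^2\|x(t)\|^2$ for some $t$ with $x(t)\neq 0$ (the case $x(t)=0$ being trivial). The energy identity then forces $\int_t^{t+T} x(\tau)^\top A(\tau)x(\tau)\,\td\tau < \tfrac{1-\rho^2}{2}\|x(t)\|^2$, i.e.\ the excitation energy accumulated along the \emph{actual} trajectory is small. On the other hand, testing \eqref{eq:excited_linear} against the \emph{frozen} vector $x(t)$ gives $\int_t^{t+T} x(t)^\top A(\tau)x(t)\,\td\tau \ge \varepsilon\|x(t)\|^2$. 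Subtracting the two, $\int_t^{t+T}\bigl(x(t)-x(\tau)\bigr)^\top A(\tau)\bigl(x(t)+x(\tau)\bigr)\,\td\tau \ \ge\ \bigl(\varepsilon-\tfrac{1-\rho^2}{2}\bigr)\|x(t)\|^2$.

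To close the contradiction I need to bound the left-hand side by a quantity vanishing as $\rho\uparrow 1$. Writing $x(t)-x(\tau)=\int_t^\tau A(s)x(s)\,\td s$ and using $\|A(s)x(s)\|\le \|A(s)\|^{1/2}\bigl(x(s)^\top A(s)x(s)\bigr)^{1/2}$ followed by Cauchy--Schwarz yields $\|x(t)-x(\tau)\|\le (\alpha T)^{1/2}\bigl(\int_t^{t+T}x^\top A x\bigr)^{1/2} < (\alpha T)^{1/2}\bigl(\tfrac{1-\rho^2}{2}\bigr)^{1/2}\|x(t)\|$, where $\alpha:=\sup_t\|A(t)\|$; together with $\|x(t)+x(\tau)\|\le 2\|x(t)\|$ and $\|A(\tau)\|\le\alpha$, the left-hand side above is $O\bigl((1-\rho^2)^{1/2}\bigr)\|x(t)\|^2$. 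Since the right-hand side tends to $\varepsilon\|x(t)\|^2>0$ as $\rho\uparrow1$, a contradiction follows for $\rho$ sufficiently close to $1$. Hence $\|x(t+T)\|\le\rho\|x(t)\|$ for all $t$; iterating over consecutive windows and interpolating with the monotonicity of $\|x(\cdot)\|$ gives $\|x(t)\|\le \rho^{-1}e^{-\lambda(t-t_0)}\|x(t_0)\|$ with $\lambda=-T^{-1}\ln\rho$, uniformly in $t_0$, i.e.\ uniform exponential stability.

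The main obstacle is precisely the estimate on $\|x(t)-x(\tau)\|$ over the window: it is what converts "small energy decrease" into "small drift of the state," and it is the only place a uniform upper bound $\alpha=\sup_t\|A(t)\|$ is used. As stated the proposition does not list this bound, so I would either add the hypothesis that $A$ is bounded (which is automatic in the application, where $A(t)=P(t)$ is built from the bounded data $p_d$, $R_d$, $e_1$) or replace this crude Cauchy--Schwarz estimate by a sharper one; the rest of the argument is routine.
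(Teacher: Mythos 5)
Your proposal is correct (modulo the boundedness caveat you yourself flag) but takes a genuinely different route from the paper. The paper does not argue directly: it verifies the hypothesis of Morgan and Narendra's Theorem~1 by showing, via Cauchy--Schwarz, that the excitation bound gives $\int_t^{t+T}\lVert A(\tau)y\rVert\,\mathrm{d}\tau\geq\varepsilon$ for every unit vector $y$, and then, by summing over whole windows, the linear-growth condition $\int_{t_0}^{t}\lVert A(\tau)y\rVert\,\mathrm{d}\tau\geq\tfrac{\varepsilon}{T}(t-t_0)-\varepsilon$; uniform asymptotic stability follows from that cited theorem, and uniform exponential stability from Hale's Theorem~III.2.1. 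You instead give a self-contained window-contraction argument with $V(x)=\tfrac12\lVert x\rVert^2$: the energy identity, testing the excitation inequality at the frozen state $x(t)$, the algebraic identity $x(t)^\top A x(t)-x(\tau)^\top A x(\tau)=(x(t)-x(\tau))^\top A\,(x(t)+x(\tau))$, and the drift bound via $\lVert A x\rVert\leq\lVert A\rVert^{1/2}(x^\top A x)^{1/2}$ followed by Cauchy--Schwarz; these steps all check out and produce a uniform contraction factor $\rho<1$ per window, hence an explicit decay rate in terms of $\varepsilon$, $T$ and $\alpha=\sup_t\lVert A(t)\rVert$. What each buys: your argument is elementary, quantitative, and avoids leaning on two external theorems; the paper's is shorter but outsources the essential step to Morgan--Narendra. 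The one substantive difference in hypotheses is the uniform bound $\alpha<\infty$ that your drift estimate requires and which the proposition does not state; note, however, that the paper's route is not actually free of this either (Morgan--Narendra's theorem is stated for bounded $P(t)$), and in the sole application of the proposition, where $A=P(t)$ is assembled from the bounded quantities $p_d$ and $R_d$, boundedness is automatic --- so adding it to the statement, as you suggest, is the appropriate fix for either proof.
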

\begin{proof}
        We aim to show that the assumed conditions of symmetric positive semi-definite $A(t)$ and \eqref{eq:excited_linear} imply the conditions required of \cite{morganUniformAsymptoticStability1977} (Theorem 1).
        This proof reproduces an argument used in \cite{trumpfAnalysisNonLinearAttitude2012} (Proposition 4.6).
        By the Cauchy-Schwarz inequality, the fact that $\int_t^{t+T} A(\tau) \td \tau$ is lower-bounded implies that for any unit vector $y \in \R^n$, the integral $\int_t^{t+T} \lVert A(\tau) y \rVert \td \tau$ is also lower bounded, as
        \begin{align*}
            \int_t^{t+T} \lVert A(\tau) y \rVert \td \tau &= \int_t^{t+T} \lVert y \rVert \lVert A(\tau) y \rVert \td \tau \geq \int_t^{t+T} y^\top A(\tau) y \td \tau = y^\top \int_t^{t+T}  A(\tau) \td \tau y \geq y^\top (\varepsilon \Id) y = \varepsilon.
        \end{align*}

        Given $t \geq t_0 > 0$, define $N = \lfloor \frac{t - t_0}{T} \rfloor $ and consider subdividing the interval $[t_0, t]$ into $N$ intervals of length $T$ and a remainder $[t_0 + NT, t]$:
        \begin{align*}
            [t_0, t] &= [t_0, t_0 + T] \cup [t_0 + T, t_0 + 2T] \cup... [t_0 + (N-1)T, t_0 + NT] \cup [t_0 + NT, t].
        \end{align*}
        Then
        \begin{align*}
            \int_{t_0}^t  \lVert A(\tau) y \rVert \td \tau &= \sum_{i = 1}^{N} \int_{t_0 + (i-1)T}^{t_0 + iT} \lVert A(\tau) y \rVert \td \tau + \int_{t_0 + NT}^{t} \lVert A(\tau) y \rVert \td \tau\\
                                          &\geq N\varepsilon\\
                                          &=  \frac{\varepsilon}{T} \left\lfloor \frac{t - t_0}{T} \right\rfloor T\\
                                          &= \frac{\varepsilon}{T} \left( (t - t_0) -((t - t_0) - \left\lfloor \frac{t - t_0}{T} \right\rfloor T) \right)\\
                                          &\geq \frac{\varepsilon}{T} \left( (t - t_0) -T \right)\\
                                          &= \frac{\varepsilon}{T} (t - t_0) - \varepsilon.
        \end{align*}
        This is the condition required by \cite{morganUniformAsymptoticStability1977} (Theorem 1) and so the system \eqref{eq:excited_linear} is uniformly asymptotically stable.
        Then by \cite{haleOrdinaryDifferentialEquations2009} (Theorem III.2.1), the system is also uniformly exponentially stable.
    \end{proof}



\bibliographystyle{plainnat}
\bibliography{refs}

\end{document}